\newtheorem{theorem}{Theorem}[section]
\newtheorem{lemma}[theorem]{Lemma}
\newtheorem{proposition}[theorem]{Proposition}
\newtheorem{definition}[theorem]{Definition}
\newcommand{\ignore}[1]{}
\title{\LARGE \bf
Maximizing System Throughput Using Cooperative Sensing in Multi-Channel Cognitive Radio Networks
}
\author{Shuang Li, Zizhan Zheng, Eylem Ekici and Ness B. Shroff
\thanks{Shuang Li is with Department of Computer Science and Engineering, The Ohio State University, Columbus OH 43210,
        {\tt\small li.908@osu.edu}}%
\thanks{Zizhan Zheng, Eylem Ekici and Ness Shroff are with the Department of Electrical and Computer Engineering, The Ohio State University, Columbus OH 43210,
        {\tt\small \{zhengz,ekici,shroff\}@ece.osu.edu}}%
        \thanks{This work has been funded in part by the Army Research Office MURI award W911NF-08-1-0238 and National Science Foundation awards CNS-1065136, CNS-1012700, and CCF-0914912.
}
}
\begin{document}

\maketitle
\thispagestyle{empty}
\pagestyle{empty}

\begin{abstract}
In Cognitive Radio Networks (CRNs), unlicensed users are allowed to
access the licensed spectrum when it is not currently being used by
primary users (PUs). In this paper, we study the throughput maximization problem for a multi-channel CRN where each SU can only sense a limited number of channels. We show that this problem
is strongly NP-hard, and propose an approximation algorithm with a factor at least $\frac{1}{2}\mu$ where $\mu \in [1,2]$ is a system parameter reflecting the sensing capability of SUs across channels and their sensing budgets. This performance guarantee is achieved by exploiting a nice structural property of the objective function and constructing a particular matching.
Our numerical results demonstrate the advantage of our algorithm compared with both a random and a greedy sensing assignment algorithm.
\end{abstract}

\section{Introduction}
\label{sec:intro}
{In the past decade, cognitive radio networks (CRNs) have emerged as a promising solution for achieving better utilization of the frequency spectrum to satisfy the increasing demand of wireless communication resources. In CRNs, secondary users (SUs) are offered the opportunity of accessing the licensed channel when their activities do not cause disruptions for primary user (PU) transmissions. To this end, the Federal Communications Commission (FCC)~\cite{FCC} has opened the broadcast TV frequency bands for unlicensed users such as WLAN and WiFi. Most recently, congressional negotiators have reached the compromise to allow the auction of TV broadcast spectrum to wireless Internet providers \cite{nytimes}. IEEE has announced the IEEE 802.22 wireless network standard~\cite{ieee} that specifies how to utilize the unused resources between channels in the TV frequency spectrum.

To guarantee a high system throughput in a CRN, the main challenge is for the SUs to accurately detect the channel state of PUs while exploiting transmission opportunities over the white space. Sensing inaccuracies may lead to either a {\it false alarm}, where a channel is detected to be occupied when it is actually idle, or a {\it misdetection}, where a channel is detected to be idle when it is actually occupied. While the former hurts SU throughput, the latter hurts both PU and SU throughput. To improve sensing accuracy, {\bf cooperative spectrum sensing} schemes~\cite{Ganesan05, Peh, Mishra06} have been recently developed, where a joint decision is derived from individual observations made by multiple SUs, which effectively alleviates the impact of incorrect individual decisions on throughput by exploiting the spatial diversity of the SUs.

While cooperative sensing improves sensing accuracy, it also incurs sensing and reporting overhead at the SU side, especially when an SU senses multiple channels in a multi-channel CRN. In particular, requiring each SU to sense all the channels in a CRN may lead to long sensing durations, especially when the number of channels is large, which in turn reduces the average throughput of SUs. It is therefore reasonable to put a limit on the maximum sensing duration that an SU can afford, which translates to a budget on the number of channels that an SU can sense. Due to the hardware constraints, this budget could be different for different SUs. In this paper, we study the throughput optimization problem for a multi-channel CRN subject to this sensing constraint.

Various cooperative sensing protocols have been proposed for maximizing system-wide performance metrics such as sensing accuracy~\cite{Peh} and system throughput~\cite{li,Zhang}. However, these works either focus on a single-channel setting~\cite{Peh,li} or allow each SU to sense all the channels~\cite{kim,Fan_Jiang_2010,Zhang}. In particular, an optimal Bayesian decision rule that maps a vector of local binary decisions made at SUs to a global decision on PU activity has been found for maximizing system throughput in a single channel setting~\cite{li}, which achieves significantly better performance than linear rules such as AND, OR, and majority rules. However, a direct extension of the result in~\cite{li} to the multi-channel setting would require each SU to sense all the channels and incur high sensing duration. On the other hand, most works on multi-channel cooperative sensing put no explicit constraint on sensing duration of SUs. Furthermore, these works either use a simple linear decision rule~\cite{Zhang} or require the transmission of the entire local sensing samples or sensing statistics at each SU. In our work, we choose to use a binary decision rule to avoid the high overhead involved in reporting complete local sensing results. However, instead of using a suboptimal linear rule as in~\cite{Zhang}, we use the optimal decision rule proposed in~\cite{li} for each channel.

In this paper, we study the problem of maximizing the system throughput in a multi-channel CRN, by deciding for each channel, a subset of SUs to sense the channel, subject to the sensing budget constraint at each SU. Our main contributions can be summarized as follows:}

\begin{itemize}
\item {We show that the throughput maximization problem is NP-hard in the strong sense and hence does not have a pseudo-polynomial time algorithm unless P = NP.}
\item We prove that the system throughput function satisfies a structural property, and based on this we propose a matching-based algorithm, which achieves an approximation factor at least $\frac{1}{2}\mu$ where $\mu\in [1,2]$ is a system parameter depending on the sensing capability of SUs across channels and their sensing budgets.
\end{itemize}

This paper is organized as follows. The system model and the problem formulation are introduced in Section~\ref{sec:model}. In Section~\ref{sec:NP}, we prove that the optimization problem is NP-hard in the strong sense. We then prove the structural property of the system throughput function, and propose a matching based algorithm in Section~\ref{sec:sol}. In Section~\ref{sec:simu}, numerical results illustrate the performance of our algorithms. The paper is concluded in Section~\ref{sec:con}.

\section{System Model}
\label{sec:model}

In this section, we present the system model in two parts: communication model and cooperative sensing model. Based on the models, we formulate our overall objective, which  is to decide the channel sensing assignment to maximize the overall system throughput.

\subsection{Communication Model}
\label{subsec:com_mod}

We consider a time-slotted cognitive radio network composed of $M$ orthogonal channels {(each corresponding to a PU)\footnote{Our model can be generalized to the scenario where multiple PUs access the same channel.}} and $N$ SUs. An SU may sense multiple PUs depending on its location. When the channel is idle, SUs that do not interfere with each other can transmit over it. Since scheduling and channel assignment for SU transmission are not the focus of this paper, we employ a simple policy: an SU is randomly selected for transmission over each available channel. Our model can readily be extended to practical models where conflict sets for a given interference model are known. We denote the set of SUs by $S = \{s_1,...,s_N\}$ with $|S|=N$, and the set of channels by $C = \{c_1,...,c_M\}$ with $|C|=M$.

\subsection{Cooperative Sensing Model}
\label{subsec:coop_mod}
We assume that a binary decision is made at an SU for each channel it senses. {Let} $P_f^i(k)$ represent the {\bf probability of false alarm}, i.e., the probability that a SU $s_i$ senses channel $k$ to be occupied when in fact it is idle. {Similarly,} $P_m^i(k)$ represents the {\bf probability of mis-detection}, i.e., the probability that $s_i$ senses channel $k$ to be idle when it is actually occupied. Note that SUs outside the sensing range, if selected for sensing, report random sensing results. For instance, $P_m^i(k)=\frac{1}{2}$ and $P_f^i(k)=\frac{1}{2}$ if SU $i$ is outside the sensing range of PU $k$. We assume that these probabilities can be learned using historical data \cite{Fan_Jiang_2010, Ganesan05, kim}. For instance, given the location information of SUs and hardware parameters such as energy detection threshold and time bandwidth product, etc., $P_m^i(k)$ and $P_f^i(k)$ can be calculated accordingly (see Section~\ref{subsec:gen} for an example).

{\bf Multi-Channel Cooperative Sensing}: SUs may sense the licensed channels cooperatively to reduce sensing errors. To encourage cooperative sensing, we assume {that $\sum_i l_i\ge M$}, which is common in cooperative sensing models \cite{Zhang}, thus the expected number of SUs {that} sense a certain channel is at least $1$. The sensing results of individual SUs are assumed to be independent\ignore{, and for the same SU, the sensing results from different channels are assumed to be independent as well}. As mentioned earlier, due to practical constraints, SUs can sense a limited number of channels. We denote $l_i$ as the maximum number of channels that SU $s_i$ can sense in a time slot, $ 0 \leq l_i \leq M$, for all $i=1,\cdots,N$ and let $l_{max}=\max_{i=1}^{N}{l_i}$. Note that $l_i=0$ means that the SU is in not in the sensing range of any channel, thus it cannot do any sensing and only guess the PU state randomly. In cooperative sensing under the multi-channel setting, multiple SUs choose to sense different channels and predict {channel availability} subject to the budget constraint, and different sensing set assignments lead to different system throughput across channels. We consider a centralized system model, where a central controller is responsible for (1) maintaining system parameters for PUs and SUs (2) in each time slot, deciding for each channel, a subset of SUs to sense the
channel, and (3) making a global decision on channel availability based on the local binary decisions of SUs. Let $S_k$ denote the set of SUs that cooperatively sense channel $k$. The set of all feasible channel sensing assignment policies are denoted by $\mathcal P$, and defined as follows.

\begin{definition}
{\bf Feasible assignment policy} $\mathcal{P}$: A set of sensing sets $\{S_1,\cdots,S_M\}$ is a feasible assignment policy if $\sum\limits_{k=1}^{M}{1_{\{s_i\in S_k\}}}\le l_i$ for all $i$, i.e., all SUs must be assigned to at most $l_i$ channels to sense.
\end{definition}

Let $x_i(k)$ denote the observation of channel $k$ by SU $s_i\in S_k$. {Further,} $x_i(k)=1$ represents that $s_i$ observes channel $k$ to be active, while $x_i(k)=0$ represents that $s_i$ observes channel $k$ to be idle. We {let $\boldsymbol{x}(S_k)$ denote} the vector of observations for channel $k$. Let $\Omega=\{0,1\}$, and let $f_{A}: \Omega ^{|A|}\rightarrow \Omega$ denote a general decision rule that maps the local observations made by a set of SUs, $A \subseteq S$, to global decision on channel activity. As the domain of $f_{A}$ will be clear from the context, we drop the subscript and use $f$ instead. This decision rule applies per channel. Let $B(k)$ denote the activity of channel $k$ such that $B(k)=1$ if channel $k$ is occupied, and $B(k)=0$ otherwise. According to the definitions of false alarm and mis-detection, we define the conditional probability of sensing channel $k$ to be idle when it is indeed idle as follows, where vector $\boldsymbol{y}$ denotes a particular instance of an observation vector:
\vspace{-1ex}
{
\begin{eqnarray}
{}&&P(f(\boldsymbol{x}(S_k))=0|B(k)=0)\nonumber \\
\label{eq:00}
&=&\sum\limits_{{\boldsymbol{y}}: f({\boldsymbol{y}})=0}{P({\boldsymbol{x}}(S_k)=\boldsymbol{y}|B(k)=0)},
\end{eqnarray}
}
where
\vspace{-3ex}
{
\begin{eqnarray*}
{}&&P({\boldsymbol{x}}(S_k)=\boldsymbol{y}|B(k)=0)\\
&=& \prod\limits_{{y}_i=1, s_i \in S_k}{P_f^i(k)} \prod\limits_{{y}_j=0, s_j\in S_k}{(1-P_f^j(k))},
\end{eqnarray*}
}

Similarly, we define the conditional probability of sensing channel $k$ to be occupied when it is indeed occupied:
{
\begin{eqnarray}
{}&&P(f(\boldsymbol{x}(S_k))=1|B(k)=1)\nonumber\\
\label{eq:11}
&=&\sum\limits_{{\boldsymbol{y}}: f({\boldsymbol{y}})=1}{P({\boldsymbol{x}}(S_k)=\boldsymbol{y}|B(k)=1)},
\end{eqnarray}
}
where
{
\begin{eqnarray*}
{}&&P({\boldsymbol{x}}(S_k)=\boldsymbol{y}|B(k)=1)\\
&=&\prod\limits_{{y}_i=1, s_i \in S_k}{(1-P_m^i(k))} \prod\limits_{{y}_j=0, s_j\in S_k}{P_m^j(k)}.
\end{eqnarray*}
}

We assume that in each time slot, a control slot $T_c$ is assigned {for cooperative sensing, during which time a central controller collects $P_m^i(k)$ and $P_f^i(k)$ from SUs, determines the channel sensing assignment, collects sensing results from SUs, and notifies an SU per channel to transmit if that channel is cooperatively sensed to be ``idle.'' Note that each SU $i$ only needs to send updates to the central controller of  $P_m^i(k)$, $P_f^i(k)$ when their values change, e.g, when the location of the SU changes. Furthermore, the central controller only needs to compute a new assignment only when $P_m^i(k)$, $P_f^i(k)$ change. We assume $T_c$ to be a constant in the paper. We further assume that SUs {can transmit at the same bit rate over each channel, and normalize this rate to $1$.} SUs are assumed to be always backlogged and only one of them is scheduled over channel $k$ if sensed available in each time slot. Let ${\pi}_0(k)$ denote the probability that channel $k$ is idle, which is assumed to be acquired accurately over time. The capacity of channel $k$ is denoted by $\gamma(k)$ (after normalization), $k=1,\cdots,M$. We {define} ${\theta}_1(k)=(1-T_c){\pi}_0(k)$ and ${\theta}_2(k)=\gamma (k) (1-{\pi}_0(k))$. Following the logic in \cite{li} and extending to the multi-channel case, we define the expected SU throughput over channel $k$ sensed by $S_k$.
{
\begin{eqnarray}
U_k^1(S_k)&:=&(1-T_c)P(B(k)=0,f({\boldsymbol{x}}(S_k))=0)\nonumber\\
\label{eq:su}
&=&{\theta}_1(k) P(f({\boldsymbol{x}}(S_k))=0|B(k)=0)\\
{}&&\mbox{if }S_k\neq \emptyset\mbox{;}\nonumber\\
U_k^1(S_k)&:=&0\mbox{ if }S_k = \emptyset .\nonumber
\end{eqnarray}
}
\noindent where we assume that if $S_k=\emptyset$, no sensing is conducted for channel $k$ and the channel is never accessed. Likewise, the expected throughput of channel $k$ can be represented by
{
\begin{eqnarray}
\label{eq:pu}
U_k^2(S_k)&:=&{\theta}_2(k) P(f({\boldsymbol{x}}(S_k))=1|B(k)=1)\\
{}&&\mbox{if }S_k\neq \emptyset\mbox{;} \nonumber \\
U_k^2(S_k)&:=&{\theta}_2(k)\mbox{ if }S_k = \emptyset . \nonumber
\end{eqnarray}
}
\begin{definition}
\label{def:Uk}
{\bf System throughput}: For a channel assignment $\{S_1,\cdots,S_M\}$, we define the throughput over channel $k$ to be the sum of SU and PU throughput over channel $k$, denoted as $U_k(S_k)=U_k^1(S_k)+U_k^2(S_k)$. The system throughput is defined as $\sum\limits_{k=1}^{M}{U_k(S_k)}$.
\end{definition}

Note that for a given channel sensing assignment, the achievable system throughput is determined by the decision rule $f$. In this paper, we apply the optimal Bayesian decision rule proposed in~\cite{li} to each channel respectively, to obtain the optimal expected system throughput. Formally, for each channel $k$ and an observation vector $\boldsymbol{y}$ by $S_k$, if ${\theta}_2(k)P({\boldsymbol{x}}(S_k)=\boldsymbol{y}|B(k)=1)\ge {\theta}_1(k) P({\boldsymbol{x}}(S_k)=\boldsymbol{y}|B(k)=0)$, the decision on channel $k$ is ``occupied'', and the contribution to throughput is ${\theta}_2(k)P({\boldsymbol{x}}(S_k)=\boldsymbol{y}|B(k)=1)$ ; otherwise, the decision on channel $k$ is ``idle'' and the contribution is ${\theta}_1(k) P({\boldsymbol{x}}(S_k)=\boldsymbol{y}|B(k)=0)$.}

\subsection{Problem Formulation}
\label{subsec:max_formulation}
We formulate the optimization problem to maximize the system throughput, including PUs and SUs on all channels, as follows:

Problem (A):$ \max\limits_{\{S_1,\cdots,S_M\} \in \mathcal{P}}{\sum\limits_{k=1}^{M}{U_k(S_k)}}$,

\noindent where the Bayesian decision rule is implicit in the definition of $U_k(\cdot)$.

Our goal is to decide the optimal channel sensing assignment to maximize system throughput. We adopt a common assumption that PUs can tolerate interference to a certain extent, which may appear in the form of a constraint as in \cite{Fan_Jiang_2010, Peh} and our earlier paper \cite{li} for the single channel setting. In the future, we plan to extend our solution presented in this paper to Problem (A) with explicit constraints on PU throughput.

We assume that the system is static and the optimization is done in a single time slot. Note that the solution of the static assignment would apply to multiple time slots if $P_m^i(k)$ and $P_f^i(k)$ do not change over time, or if changes occur over a much slower time scale.

\ignore{
\subsection{Cooperative Sensing Protocol}
\label{subsec:prot}

We describe the communication and cooperative sensing protocols as follows:

1) SUs report $P_m^i(k)$ and $P_f^i(k)$ to the central controller.

2) The central controller determines the channel sensing assignment $\{S_1,\cdots,S_M\}$ {based on $P_m^i(k)$, $P_f^i(k)$ and other system parameters}.

3) The central controller notifies each SU of the indices of channels they should sense.

4) SUs receiving channel IDs sense the corresponding channel and report a binary result to the central controller. We assume that both the sensing channel and the reporting channel are reliable.

5) The central controller makes the fusion decision on the channel availability based on the sensing results from SUs and schedules SUs for transmission over channels that are sensed to be idle (one SU per channel).

6) If the channel is cooperatively sensed to be busy, no SU is scheduled over the channel.
}
\section{Hardness of the Problem}
\label{sec:NP}

In this section, we will show that Problem~(A) is strongly NP-hard \cite{Vijay}, by a reduction from Product Partition, which is NP-complete in the strong sense \cite{Ng}. The Production Partition problem is defined as follows: Given $N$ positive integers $a_1$, $a_2$, $\cdots$, $a_N$, is there a subset $X \subseteq \mathcal N:=\{1,2,\cdots,N\}$ such that $\prod\limits_{i\in X}{a_i}=\prod\limits_{i\in \mathcal N\setminus X}{a_i}$?

We reduce Product Partition to the following subproblem of Problem~(A), with $M=2$, $P_f^i(1)=P_f^i(2)=0$ for all $i$, $P_m^i(1)=P_m^i(2):=P_m^i$ for all $i$, and $l_i = 1$ for all $i$, $\gamma(1)=\gamma(2):=\gamma$, ${\pi}_0(1)={\pi}_0(2):={\pi}_0$, $(1-T_c){\pi}_0:={\theta}_1$, $\gamma (1-{\pi}_0):={\theta}_2$, and ${\theta}_1 ={\theta}_2$.

Let $(S_1,S_2)$ denote a solution to this subproblem. Without loss of optimality, we can assume $S_1$ and $S_2$ form a partition of the set of SUs, i.e., $S_1 \cup S_2 = S$ and $S_1 \cap S_2 = \emptyset$. The expected system throughput can then be easily determined using the Bayesian rule as $U_1(S_1) = {\theta}_1+{\theta}_2(1-\prod\limits_{s_i \in S_1}{P_m^{i}})$ and $U_2(S_2) = {\theta}_1+{\theta}_2(1-\prod\limits_{s_i \in S_2}{P_m^{i}})$. Problem~(A) then becomes: $\max\limits_{S_1 \subseteq S}{\Big[ 2{\theta}_1+{\theta}_2(2-(\prod\limits_{s_i \in S_1}{P_m^{i}} + \prod\limits_{s_i \in S \backslash S_1}{P_m^{i}}))\Big] }$, which is further equivalent to $\min\limits_{S_1 \subseteq S}{(\prod\limits_{s_i \in S_1}{P_m^{i}} + \prod\limits_{s_i \in S \backslash S_1}{P_m^{i}})}$ since $2{\theta}_1+2{\theta}_2$ is a constant. We then establish the strong NP-hardness of Problem~(A) by showing
that this new problem is strongly NP-hard.

\begin{proposition}
Problem~(A) is strongly NP-hard.
\label{thm:NP}
\end{proposition}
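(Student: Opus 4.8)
The plan is to complete the chain of reductions begun above by exhibiting a concrete, value-preserving map from a Product Partition instance to the minimization problem $\min_{S_1\subseteq S}\big(\prod_{s_i\in S_1}P_m^i+\prod_{s_i\in S\setminus S_1}P_m^i\big)$, which the excerpt has already shown to be equivalent to this subproblem of Problem~(A). Given positive integers $a_1,\dots,a_N$, I would first delete every $a_i=1$; such elements can be placed on either side of any partition without changing a product, so they do not affect whether a balanced partition exists, leaving $a_i\ge 2$ without loss of generality. I would then set $P_m^i:=1/a_i\in(0,\tfrac12]$, a valid mis-detection probability strictly below $1$. The essential reason for using reciprocals rather than a linear scaling $P_m^i=a_i/K$ is that it keeps the correspondence purely multiplicative: $\prod_{s_i\in S_1}P_m^i=1/\prod_{s_i\in S_1}a_i$, with no spurious cardinality-dependent factor $K^{|S_1|}$ to destroy the equivalence with Product Partition.

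Writing $A_1=\prod_{s_i\in S_1}a_i$ and $A_2=\prod_{s_i\in S\setminus S_1}a_i$, the product $A_1A_2=\prod_i a_i=:P$ is a fixed constant, so the objective becomes $1/A_1+1/A_2=(A_1+A_2)/P$. Minimizing it is therefore the same as minimizing $A_1+A_2$ subject to $A_1A_2=P$, and by AM--GM we have $A_1+A_2\ge 2\sqrt{P}$ with equality if and only if $A_1=A_2=\sqrt P$, i.e.\ if and only if $\prod_{s_i\in S_1}a_i=\prod_{s_i\in S\setminus S_1}a_i$. Hence the optimal value of the minimization equals $2/\sqrt P$ exactly when the Product Partition instance is a YES instance, and is strictly larger otherwise. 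This yields the desired equivalence: solving the subproblem (and hence Problem~(A)) on the constructed instance decides Product Partition.

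To upgrade this to strong NP-hardness, I would verify that the transformation is a pseudo-polynomial one in the sense of Garey--Johnson. The numbers appearing in the constructed instance are the probabilities $1/a_i$, whose encodings are governed by the original integers $a_i$, so the largest magnitude in the new instance is polynomially bounded by that of the Product Partition instance. Although $P=\prod_i a_i$ and the threshold $2/\sqrt P$ are astronomically large/small, each of $A_1,A_2$ has only $O(N\log\max_i a_i)$ bits, so given an optimal partition I can compute $A_1,A_2$ and test $A_1=A_2$ in time polynomial in the input. Consequently a pseudo-polynomial algorithm for Problem~(A) would yield a pseudo-polynomial algorithm for Product Partition, which is impossible unless $\mathrm P=\mathrm{NP}$ since Product Partition is NP-complete in the strong sense~\cite{Ng}.

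I expect the main obstacle to be not the algebra but the two modeling points that make the reduction legitimate: first, choosing the reciprocal embedding so that the two-set product structure of the objective matches Product Partition without introducing a cardinality bias, since a linear embedding silently fails here; and second, confirming that the comparison against the exponentially small optimal value $2/\sqrt P$ can still be carried out within the pseudo-polynomial budget, so that the transformation genuinely preserves strong hardness rather than merely ordinary NP-hardness. The facts that $S_1,S_2$ may be taken to partition $S$ and that $U_1,U_2$ evaluate to the stated closed forms under the Bayesian rule are already supplied in the text and would simply be cited.
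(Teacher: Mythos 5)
Your proof is correct, and it departs from the paper's in exactly one place: the embedding of the integers $a_i$ into mis-detection probabilities. The paper sets $P_m^i=a_i/10^r$ with a single scaling exponent $r$, and then claims that the subproblem's optimum equals $2\sqrt{\prod_{s_i\in S}P_m^i}$ if and only if the Product Partition instance is a YES instance. That claim is actually false as stated, for precisely the reason you flag: the uniform scaling introduces the cardinality-dependent factor $10^{r|S_1|}$, so the AM--GM equality condition becomes $\prod_{S_1}a_i\,10^{-r|S_1|}=\prod_{S\setminus S_1}a_i\,10^{-r|S\setminus S_1|}$ rather than $\prod_{S_1}a_i=\prod_{S\setminus S_1}a_i$. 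A concrete counterexample is $a=(5,4,2)$, giving $r=1$ and $P_m=(0.5,0.4,0.2)$: the split $\{0.5,0.4\}$ versus $\{0.2\}$ attains the AM--GM bound $2\sqrt{0.04}=0.4$, yet $\prod a_i=40$ is not a perfect square, so Product Partition answers NO; conversely, for the YES instance $a=(4,2,2)$ the subproblem's optimum is $0.28$, not $2\sqrt{0.016}$. Your reciprocal embedding $P_m^i=1/a_i$ (after discarding the harmless $a_i=1$ elements) keeps the correspondence purely multiplicative, so $1/A_1+1/A_2=(A_1+A_2)/P$ is minimized exactly at $A_1=A_2=\sqrt{P}$, and the equivalence with Product Partition is exact. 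Your accounting of why the transformation is pseudo-polynomial (the magnitudes in the new instance are governed by the $a_i$ themselves, and the equality test $A_1=A_2$ is done on $O(N\log\max_i a_i)$-bit integers rather than on the irrational threshold $2/\sqrt{P}$) is also the right way to conclude strong, rather than ordinary, NP-hardness. In short, you follow the paper's overall reduction strategy but substitute a different gadget that is needed to make the argument go through.
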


\begin{proof}
By the above argument, it suffices to prove that the subproblem, $\min\limits_{S_1 \subseteq S}{(\prod\limits_{s_i \in S_1}{P_m^{i}} + \prod\limits_{s_i \in S \backslash S_1}{P_m^{i}})}$, is strongly NP-hard. Given an instance of Product Partition with parameters $a_1,\cdots, a_N$, we reduce it to an instance of this subproblem as follows: let $P_m^i=a_i/10^r$, $i = 1,\cdots,N$, where $r$ is the smallest integer such that $P_m^i \leq 1$ for all $i=1,...,N$. This reduction can clearly be done in polynomial time. Furthermore, if there is a subset $X \subseteq \mathcal N$, such that $\prod\limits_{i\in X}{a_i}=\prod\limits_{i\in \mathcal N \setminus X}{a_i} = \sqrt{\prod\limits_{i\in \mathcal N}{a_i}}$, then the optimal solution to the subproblem is $2 \sqrt{\prod\limits_{s_i\in S}{P_m^i}}$, and vice-versa. Hence if there is polynomial time algorithm to the subproblem, the Product Partition problem can be determined in polynomial time as well, which contradicts the fact that Product Partition is strongly NP-complete.
\end{proof}

Since Problem~(A) is strongly NP-hard, no pseudo-polynomial time algorithms exist unless P = NP \cite{Vijay}. We will propose a matching-based approximation algorithm that has theoretical lower bound 
in Section~\ref{sec:sol}.

\section{Approximate Solutions}
\label{sec:sol}

In this section, we propose an efficient approximation algorithm for Problem~(A). \ignore{We first note that Problem~(A) can be reviewed as a welfare maximization problem studied in the context of combinatorial auctions, where the set of SUs correspond to the set of items for sale, and the set of PUs are the bidders, and the throughput function $U_k(S_k)$ models the valuation for the $k$-th bidder when it obtains a subset of items $S_k$. Although the general welfare maximization problem is hard to approximate~\cite{Dobzinski}, it allows efficient approximations when {the utility function (system throughput function in our scenario)} satisfies some structural properties~\cite{Dobzinski,Feige}.} We first prove an upper and a lower bound on system throughput. We then propose a matching-based approximation algorithm. By exploiting the structural properties of the problem and the bounds on system throughput, we show that the algorithm achieves an approximation ratio of at least $\frac{1}{2}\mu$, where $\mu \in [1,2]$ is a system parameter and will be defined later.


\ignore{
Before establishing the subadditivity of the throughput function, we first introduce a new construction, which is needed to simplify the description of our algorithm.
\begin{definition}
Given a subset of SUs $A \subseteq S$, we define the corresponding {\bf extended set}, denoted as $\bar{A}$, to be the set that includes $l_i$ distinct copies of SU $s_i$ for all $s_i \in A$. For a SU $s_i \in A$, we label its $j$th copy in $\bar{A}$ as $s^j_i$, $j = 1,...,l_i$. Note that $|\bar{A}|=\sum\limits_{s_i \in A}{l_i}$.
\label{def:ex}
\end{definition}

In this definition, we make distinct copies for each SU since each SU is able to sense multiple channels in our model, thus they can be assigned multiple times to different channels. {We then extend our definition of throughput function to the extended sets}.

\begin{definition}
{\bf System throughput over extended sets}: For a channel $k$ with the sensing set $S_k$, we define the system throughput define $\bar{U}_k$: $2^{\bar{S}} \rightarrow R^+$ as the system throughput of channel $k$. $\bar{U}_k(\bar{S_k})=U_k(S_k)$
where $S_k$ is the subset of $S$ and it keeps only one copy of SU $s_i$ in $\bar{S_k}$.
\label{def:utility}
\end{definition}

The objective function of Problem~(A) is then \\$\sum\limits_{k=1}^{M}{\bar{U}_k(\bar{S_k})}$, where $\{\bar{S}_1,\cdots,\bar{S}_M\}$ ($\bar{S}_k\subseteq \bar{S}$ for all $k$) is the channel assignment.}


\subsection{Property of the System Throughput}
\label{subsec:sub}
We will show the range of the system throughput $U_k(\cdot)$ in the following lemma.

\begin{lemma}
\label{lem:single}
For any SU $s_i$ and channel $c_k$, we have $\theta_1(k)+\theta_2(k)\ge U_k({s_i})\ge \max{\Big\{{\theta}_1(k),{\theta}_2(k)\Big\}}$.
\end{lemma}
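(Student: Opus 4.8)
The plan is to evaluate $U_k(\{s_i\})$ directly from the definitions in equations~\eqref{eq:su} and~\eqref{eq:pu}, together with the Bayesian decision rule stated just before the problem formulation, and then to bound the resulting expression. Since $S_k = \{s_i\}$ is a singleton, the observation vector $\boldsymbol{x}(S_k)$ reduces to the single binary value $x_i(k) \in \{0,1\}$, so there are only two possible observations to analyze. For each of these two outcomes the Bayesian rule picks whichever of the two quantities ${\theta}_2(k)P(x_i(k)=y \mid B(k)=1)$ and ${\theta}_1(k)P(x_i(k)=y \mid B(k)=0)$ is larger and contributes that larger value to the throughput. Thus $U_k(\{s_i\}) = \sum_{y \in \{0,1\}} \max\{{\theta}_1(k)P(x_i(k)=y\mid B(k)=0),\ {\theta}_2(k)P(x_i(k)=y\mid B(k)=1)\}$, where the relevant conditional probabilities are $P(x_i(k)=0\mid B(k)=0)=1-P_f^i(k)$, $P(x_i(k)=1\mid B(k)=0)=P_f^i(k)$, $P(x_i(k)=1\mid B(k)=1)=1-P_m^i(k)$, and $P(x_i(k)=0\mid B(k)=1)=P_m^i(k)$.

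For the lower bound, I would use the elementary fact that $\max\{a,b\} \ge a$ and $\max\{a,b\}\ge b$ termwise. Summing the two $\max$ terms and selecting the ${\theta}_1(k)$-branch in both gives $U_k(\{s_i\}) \ge {\theta}_1(k)\bigl[(1-P_f^i(k)) + P_f^i(k)\bigr] = {\theta}_1(k)$, since the two false-alarm-side probabilities sum to $1$. Symmetrically, selecting the ${\theta}_2(k)$-branch in both $\max$ terms gives $U_k(\{s_i\}) \ge {\theta}_2(k)\bigl[P_m^i(k) + (1-P_m^i(k))\bigr] = {\theta}_2(k)$. Taking the larger of these two lower bounds yields $U_k(\{s_i\}) \ge \max\{{\theta}_1(k),{\theta}_2(k)\}$, which is exactly the left inequality.

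For the upper bound, the key observation is the subadditivity of $\max$: for nonnegative reals, $\max\{a,b\} \le a+b$. Applying this to each of the two outcome terms and regrouping by conditioning event gives $U_k(\{s_i\}) \le {\theta}_1(k)\bigl[(1-P_f^i(k))+P_f^i(k)\bigr] + {\theta}_2(k)\bigl[P_m^i(k)+(1-P_m^i(k))\bigr] = {\theta}_1(k) + {\theta}_2(k)$, again using that each pair of conditional probabilities sums to $1$. This establishes the right inequality $U_k(\{s_i\}) \le {\theta}_1(k) + {\theta}_2(k)$ and completes the proof.

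I do not expect a genuine obstacle here; the statement is essentially a bookkeeping consequence of the definitions. The one point requiring care is making the singleton evaluation of $U_k$ fully explicit, namely verifying that the Bayesian rule really does reduce to a termwise $\max$ over the two observation outcomes so that the clean termwise $\max\{a,b\}\ge a,\ \max\{a,b\}\ge b$ and $\max\{a,b\}\le a+b$ arguments apply. Once that reduction is written out, both bounds follow from the normalization identities $(1-P_f^i(k))+P_f^i(k)=1$ and $P_m^i(k)+(1-P_m^i(k))=1$.
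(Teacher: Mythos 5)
Your proof is correct and follows essentially the same route as the paper: both evaluate $U_k(\{s_i\})$ as the sum over the two observation outcomes of the termwise Bayesian $\max$, obtain the lower bound by selecting the $\theta_1(k)$-branch (resp.\ $\theta_2(k)$-branch) in both terms, and obtain the upper bound from $\max\{a,b\}\le a+b$. You merely spell out the branch-selection and subadditivity steps that the paper leaves implicit.
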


\begin{proof}
\vspace{-1ex}
{\small\begin{eqnarray}
U_k(\{s_i\})&=& \max{ \Big\{ {\theta}_1(k)(1-P_f^i(k)), {\theta}_2(k) P_m^i(k) \Big\} }\nonumber\\
& + & \max{ \Big\{ {\theta}_1(k)P_f^i(k), {\theta}_2(k)(1-P_m^i(k)) \Big\} }\nonumber\\
\label{eq:any}
&\ge & \max{\Big\{{\theta}_1(k),{\theta}_2(k)\Big\}} \nonumber
\end{eqnarray}
}

\vspace{-2ex}
Furthermore, it is clear from the definition of $U_k(\cdot)$ that $\theta_1(k)+\theta_2(k)\ge U_k({s_i})$, since at most both PU and SU can achieve their full capacity.
\end{proof}

\ignore{\begin{proposition}
\label{lem:s_subadd}
For all $n>0$ and all set $C$ with $|C|=n$, $\sum_{s_i\in C}{(U_k(\{s_i\})-\Delta_k)}\ge U_k(\cup_{s_i\in C}{\{s_i\}})-\Delta_k$ for all $k$.
\end{proposition}

\begin{proof}
When $n=1$, it trivially holds. We only need to investigate cases where $n\ge 2$.
\begin{eqnarray}
{}&&\sum_{s_i\in C}{(U_k(\{s_i\})-\Delta_k)}- \Bigg[U_k(\cup_{s_i\in C}{\{s_i\}})-\Delta_k\Bigg]\nonumber\\
&=& \sum_{s_i\in C}{U_k(\{s_i\})}-U_k(\cup_{s_i\in C}{\{s_i\}})-(n-1)|\theta_1(k)-\theta_2(k)| \nonumber\\
&\overset{(a)}\ge & n \max{\{\theta_1(k),\theta_2(k)\}}-(\theta_1(k)+\theta_2(k))\nonumber\\
&&-(n-1)|\theta_1(k)-\theta_2(k)| \nonumber\\
&=& (n-2)\min{\{\theta_1(k),\theta_2(k)\}}\ge 0\nonumber
\end{eqnarray}
\noindent where (a) is by Lemma~\ref{lem:single} and the fact that the no utility function can achieve the full SU capacity and the full PU capacity at the same time.
\end{proof}
}

\subsection{A Matching-Based Approximation Algorithm}
\label{subsec:approx}
In this section, we propose a maximum weighted matching (MWM)~\cite{west} based algorithm to Problem~(A). We first provided a detailed description of our algorithm (see Algorithm~\ref{alg:mwm}), and then establish its approximation factor.

\begin{algorithm}[t]
    \caption{{A maximum weighted matching based algorithm for maximizing the system throughput across channels}}\label{alg:mwm}
    {{\small Input: $N$, $M$, $T_c$, ${\pi}_0(k)$, $\gamma(k)$ for all $k$; $l_i$ for all $i$; $P_m^i(k)$, $P_f^i(k)$ for all $i$ and $k$ \\Output: $U$ and $S_k$ for all $k$}}
    \begin{algorithmic}[1]
    {{\small \STATE $S_k\leftarrow \emptyset$ for all $k$
	\STATE $V\leftarrow \{s_1^1,\cdots,s_1^{l_1},\cdots,s_N^1,\cdots,s_N^{l_N}\}\cup \{c_1,\cdots,c_M\}$
	\STATE $E\leftarrow \bigcup_{i=1,\cdots, N;\ k=1,\cdots, M}{\big\{\bigcup_{j=1}^{l_i}{(s_i^j,c_k)}\big\}}$
    \STATE $G\leftarrow (V,E)$
	\STATE $w(s_i^j,c_k)\leftarrow U_k(\{s_i\})$, $\forall i=1,\cdots,N$, $j=1,\cdots,l_i$, $k=1,\cdots, M$
	\STATE $\mathcal M \leftarrow \mbox{a maximum weight matching in }G$
	
		\STATE $S_k \leftarrow \{s_i: (s^j_i,c_k) \in \mathcal M\}$, $\forall k$
		\STATE $R \leftarrow \{s_i^j: s_i^j$ is not matched in $\mathcal M\}$
		\FORALL{$s_i^j\in R$}
			\STATE $k^*\leftarrow \arg\max\limits_{k\in \{1,\cdots,M\}, s_i \not \in S_k}{\Big[{U}_k(S_k\cup \{s_i\})-{U}_k(S_k)\Big]}$
			\STATE $S_{k^*}\leftarrow S_{k^*}\cup \{s_i\}$
		\ENDFOR
		\STATE $U \leftarrow \sum\limits_{k=1}^{M}{U_k(S_k)}$
    \STATE $U_1 \leftarrow \max\limits_{k=1}^{M}{U_k(S)}$
	\IF{$U_1 > U$}
	    \STATE $U \leftarrow U_1$
        \STATE $k^* \leftarrow \arg\max \limits_{k=1}^{M}{U_k(S)}$
		\STATE $S_{k^*}\leftarrow S, S_k \leftarrow \emptyset$ $\forall k \neq k^*$
	\ENDIF}
	}

    \end{algorithmic}
\end{algorithm}

The algorithm starts with constructing a complete and weighted bipartite graph (lines 2-4), where for each channel $k$, a vertex $c_k$ is constructed, and for each SU $s_i$, $l_i$ vertices are constructed corresponding to the $l_i$ copies of the SU, denoted as $s^j_i, j = 1,...,l_i$, and for any pair of vertices $s^j_i$ and $c_k$, there is an edge connecting them. The weight of an edge $(s^j_i, c_k)$ is then defined as $w(s^j_i, c_k) = U_k(\{s_i\})$ (line 5).

{A} maximum weight matching in the bipartite graph is then found (line 6), and for each edge $(s^j_i,c_k)$ in the matching, SU $s_i$ is assigned to sense channel $c_k$. A greedy heuristic is applied for determining the assignment of the remaining copies of SUs to channels (lines 8-12). Basically, the remaining copies are first sorted in an arbitrary order, and a copy of $s_i$ is assigned to the channel that provides the maximum marginal improvement of the system throughput among all the channels not assigned to $s_i$ yet. This scheme is then compared with {another scheme for which} all SUs are assigned to a single channel that gives maximum throughput (line 13). The algorithm outputs whichever {scheme provides a larger system throughput}.

We then analyze the complexity of Algorithm~\ref{alg:mwm}, which is dominated by computing the maximum weighted matching and evaluating the throughput function $U_k(\cdot)$. It is shown in~\cite{li} that for a given sensing set $S_k$, $U_k(S_k)$ can be evaluated using a dynamic programming algorithm in pseudo-polynomial time. Let $Q$ denote the time complexity for one evaluation of $U_k(\cdot)$. Note that the total number of such evaluations is bounded by $Nl_{max}M$. Therefore, the time complexity of Algorithm~\ref{alg:mwm} is $O(Nl_{max}MQ+{(Nl_{max}+M)}^3)$.

To establish the approximation ratio of Algorithm~\ref{alg:mwm}, we first construct a maximal matching called M$\_$Gdy that approximates the MWM and captures two key aspects: 1) SUs may have different sensing abilities for each channel; 2) channels are competing for SUs with limited sensing budget. We first prove a lower bound on the system throughput using the sensing assignments determined by M$\_$Gdy, which is then used to prove the approximation ratio of Algorithm~\ref{alg:mwm}. The matching is constructed as follows: 1) Partition the channel set $C$ into groups indexed by SU, and each group is labeled as $C_i$ that includes all channels $k$ with $U_k(\{s_i\})\ge U_k(\{s_j\})$ where $j\neq i$. Ties are randomly broken. Let $r_i$ denote the size of $C_i$. 2) Sort the channels $k$ in each group $C_i$ by $U_k^0$ in descending order, where $U_k^0=\min_{i \in S}{U_k(\{s_i\})}$. 3) Pick the first $l_i$ channels from each group $C_i$ (the set is labeled as $C_i^{l_i}$) and assign SU $i$ to sense these channels. 4) Randomly assign an unused SU copy to each of the rest channels. We will next show a lower bound on the system throughput using M$\_$Gdy in Lemma~\ref{lem:matching}. We define $\lambda_i=\min\{l_i,r_i\}/r_i$, $\rho_i=\min_{k\in C_i^{l_i}}{\frac{U_k^*}{U_k^0}}$ where $U_k^*=\max_i{U_k(\{s_i\})}$. Note that by Lemma~\ref{lem:single}, $\rho_i \in [1,2]$ for all $i$. We have the following performance bound, where $\mu=1+\min_{i \in S}{\lambda_i (\rho_i -1)}$, and $|M\_Gdy|$ denotes the system throughput using the sensing assignments determined by M$\_$Gdy.

\begin{lemma}
M$\_$Gdy achieves a system throughput no less than $\mu \sum_k{U_k^0}$.
\label{lem:matching}
\end{lemma}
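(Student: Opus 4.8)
The plan is to lower bound $|M\_Gdy|$ by splitting the channels according to the grouping used in the construction and then comparing term by term against $\mu\sum_k U_k^0$. First I would record what each channel contributes under M\_Gdy. By construction every channel $k\in C_i^{l_i}$ is sensed by its best SU $s_i$, so it contributes exactly $U_k(\{s_i\})=U_k^*$; every remaining channel is sensed by some single (possibly unrelated) SU copy, so by the definition $U_k^0=\min_{i\in S}U_k(\{s_i\})$ it contributes at least $U_k^0$. Before using this I would verify that step~4 can actually cover all leftover channels: the number of unused copies is $\sum_i\max\{l_i-r_i,0\}$ and the number of leftover channels is $\sum_i\max\{r_i-l_i,0\}$, whose difference equals $\sum_i(l_i-r_i)=\sum_i l_i-M\ge 0$ by the standing assumption $\sum_i l_i\ge M$. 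Hence a channel-saturating assignment exists that keeps each $s_i$ within its budget $l_i$.

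Next I would peel off the baseline $\sum_k U_k^0$. Writing $\sum_k U_k^0=\sum_i\sum_{k\in C_i}U_k^0$ (the groups $C_i$ partition $C$) and applying the per-channel bounds above, the leftover-channel terms cancel and I am left with
\[
|M\_Gdy|-\sum_k U_k^0\ \ge\ \sum_i\sum_{k\in C_i^{l_i}}\bigl(U_k^*-U_k^0\bigr).
\]
So it suffices to show the right-hand side is at least $\bigl(\min_{i\in S}\lambda_i(\rho_i-1)\bigr)\sum_k U_k^0$.

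I would then bound each group separately. Since $\rho_i=\min_{k\in C_i^{l_i}}U_k^*/U_k^0$, every $k\in C_i^{l_i}$ satisfies $U_k^*-U_k^0\ge(\rho_i-1)U_k^0$, giving $\sum_{k\in C_i^{l_i}}(U_k^*-U_k^0)\ge(\rho_i-1)\sum_{k\in C_i^{l_i}}U_k^0$. The key structural step is that $C_i^{l_i}$ consists of the $\min\{l_i,r_i\}$ channels of $C_i$ with the \emph{largest} $U_k^0$ (this is exactly why the construction sorts each group in descending order of $U_k^0$): the average of the top $\min\{l_i,r_i\}$ values is at least the average over all $r_i$ values of the group, so $\sum_{k\in C_i^{l_i}}U_k^0\ge\frac{\min\{l_i,r_i\}}{r_i}\sum_{k\in C_i}U_k^0=\lambda_i\sum_{k\in C_i}U_k^0$. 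Combining these (and using $\rho_i\ge 1$, from Lemma~\ref{lem:single}, so the inequality direction is preserved) yields $\sum_{k\in C_i^{l_i}}(U_k^*-U_k^0)\ge\lambda_i(\rho_i-1)\sum_{k\in C_i}U_k^0$. Summing over $i$, factoring out $\min_{i\in S}\lambda_i(\rho_i-1)$, and re-assembling $\sum_i\sum_{k\in C_i}U_k^0=\sum_k U_k^0$ gives $|M\_Gdy|\ge\bigl(1+\min_{i\in S}\lambda_i(\rho_i-1)\bigr)\sum_k U_k^0=\mu\sum_k U_k^0$.

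The step I expect to need the most care is the top-$m$-of-$n$ averaging inequality, since it is what converts the budget ratio $\lambda_i$ into a guaranteed fraction of the group's baseline throughput; it relies crucially on the descending sort in the construction and on $\rho_i\ge 1$. The feasibility check in step~4 is the other place where the global assumption $\sum_i l_i\ge M$ must be invoked explicitly, and everything else is bookkeeping around the cancellation of the leftover-channel terms.
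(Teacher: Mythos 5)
Your proposal is correct and follows essentially the same chain of inequalities as the paper's proof: channels in $C_i^{l_i}$ contribute $U_k^*\ge\rho_i U_k^0$, the descending sort gives the top-$\min\{l_i,r_i\}$ averaging bound $\sum_{k\in C_i^{l_i}}U_k^0\ge\lambda_i\sum_{k\in C_i}U_k^0$, and factoring out $\min_i\lambda_i(\rho_i-1)$ yields $\mu$; you merely present it additively (peeling off $\sum_k U_k^0$) rather than as a ratio. Your explicit feasibility check that unused copies suffice to cover the leftover channels via $\sum_i l_i\ge M$ is a detail the paper leaves implicit, but it does not change the argument.
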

\vspace{1ex}
\begin{proof}
{\small
\begin{eqnarray}
\frac{|M\_Gdy|}{\sum_k{U_k^0}}&\ge&\frac{\sum_{i\in {S}}{\Big[\sum_{k\in C_i^{l_i}}{U_k^*}+\sum_{k\in C_i\setminus C_i^{l_i}}{U_k^0}\Big]}}{\sum_k{U_k^0}}\nonumber\\
&\ge & \frac{\sum_{i\in {S}}{\Big[\sum_{k\in C_i^{l_i}}{U_k^0 \rho_i}+\sum_{k\in C_i\setminus C_i^{l_i}}{U_k^0}\Big]}}{\sum_{i\in {S}}{\sum_{k\in C_i}{U_k^0}}}\nonumber\\
&=& 1+\frac{\sum_{i \in {S}}{\Big[(\rho_i-1)\sum_{k\in C_i^{l_i}}{U_k^0}\Big]}}{\sum_{i\in {S}}{\sum_{k\in C_i}{U_k^0}}}\nonumber\\
&\ge & 1+\frac{\sum_{i \in {S}}{\Big[(\rho_i-1)\lambda_i\sum_{k\in C_i}{U_k^0}\Big]}}{\sum_{i\in {S}}{\sum_{k\in C_i}{U_k^0}}}\nonumber\\
&\ge & 1+\min_{i\in S}{\lambda_i (\rho_i-1)}=\mu\nonumber
\end{eqnarray}
}
\end{proof}

Based on Lemmas~\ref{lem:single} and \ref{lem:matching}, we show the approximation ratio of Algorithm~\ref{alg:mwm} in Proposition~\ref{prop:ratio}.

\begin{proposition}\label{prop:ratio}
Algorithm~\ref{alg:mwm} achieves at least a fraction of $ \frac{1}{2}\mu$ of the optimal system throughput for Problem~(A).
\end{proposition}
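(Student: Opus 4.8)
The plan is to sandwich the throughput $|\mathrm{ALG}|$ returned by Algorithm~\ref{alg:mwm} between the optimum $\mathrm{OPT}$ of Problem~(A) and the quantity $\sum_k U_k^0$ through the chain
\[
|\mathrm{ALG}| \;\ge\; |\mathrm{MWM}| \;\ge\; |M\_Gdy| \;\ge\; \mu\sum_k U_k^0 \;\ge\; \tfrac12\,\mu\,\mathrm{OPT},
\]
where $|\mathrm{MWM}|$ denotes the weight of the maximum weight matching computed in line~6. The third inequality is exactly Lemma~\ref{lem:matching}, while the first two will be read off from the construction of the algorithm and the last from Lemma~\ref{lem:single}.

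First I would prove the rightmost inequality $\sum_k U_k^0\ge\tfrac12\mathrm{OPT}$. For any feasible assignment each factor $P(f=0\mid B=0)$ and $P(f=1\mid B=1)$ is at most $1$, so $U_k(S_k)\le\theta_1(k)+\theta_2(k)$ for every $k$ and hence $\mathrm{OPT}\le\sum_k(\theta_1(k)+\theta_2(k))$. On the other hand, Lemma~\ref{lem:single} gives $U_k(\{s_i\})\ge\max\{\theta_1(k),\theta_2(k)\}\ge\tfrac12(\theta_1(k)+\theta_2(k))$ for every $s_i$, so the minimizer satisfies $U_k^0\ge\tfrac12(\theta_1(k)+\theta_2(k))$ as well; summing over $k$ yields $\sum_k U_k^0\ge\tfrac12\sum_k(\theta_1(k)+\theta_2(k))\ge\tfrac12\mathrm{OPT}$. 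Combined with Lemma~\ref{lem:matching} this already gives $|M\_Gdy|\ge\mu\sum_k U_k^0\ge\tfrac12\mu\,\mathrm{OPT}$.

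It then remains to show that the algorithm does at least as well as the matching it starts from, i.e.\ $|\mathrm{ALG}|\ge|\mathrm{MWM}|\ge|M\_Gdy|$. The second of these is immediate: the assignment produced by M\_Gdy places exactly one SU on each channel and respects every budget $l_i$, so it corresponds to a feasible matching in $G$ whose weight equals its system throughput $|M\_Gdy|$, and since $\mathrm{MWM}$ is a maximum weight matching, $|\mathrm{MWM}|\ge|M\_Gdy|$. For the first inequality I would argue that, because all edge weights $U_k(\{s_i\})$ are strictly positive and $\sum_i l_i\ge M$, the maximum weight matching saturates all $M$ channels; hence after line~7 each $S_k$ is a singleton and $\sum_k U_k(S_k)=|\mathrm{MWM}|$. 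The greedy loop of lines~8--12 only adds SU copies to the sets $S_k$, and the comparison in lines~13--19 only takes a maximum, so $|\mathrm{ALG}|$ is at least the throughput right after line~7, namely $|\mathrm{MWM}|$.

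The step I expect to be the main obstacle is justifying that the greedy refinement in lines~8--12 never decreases the throughput, which is precisely what makes $|\mathrm{ALG}|\ge|\mathrm{MWM}|$ hold. This reduces to the monotonicity of $U_k(\cdot)$: enlarging a sensing set cannot lower $U_k$, because the optimal Bayesian rule acting on the larger observation vector can reproduce any decision available to the smaller set by ignoring the extra coordinates, so its expected throughput can only improve. I would state and verify this monotonicity carefully, as it is the one place where the specific form of $U_k$ and the Bayesian rule enter the argument; once it is in hand the chain closes and delivers the claimed factor $\tfrac12\mu$.
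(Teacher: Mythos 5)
Your proof is correct and follows essentially the same route as the paper's: the chain $ALG \ge |M\_Gdy| \ge \mu\sum_k U_k^0 \ge \frac{1}{2}\mu\, OPT$ obtained from Lemma~\ref{lem:matching} together with the bound $\sum_k U_k^0 \ge \frac{1}{2}\sum_k(\theta_1(k)+\theta_2(k)) \ge \frac{1}{2}OPT$ from Lemma~\ref{lem:single}. The only substantive difference is that you explicitly verify that the greedy refinement of lines~8--12 cannot decrease throughput (monotonicity of $U_k$ under the optimal Bayesian rule, since the rule on the enlarged observation vector can ignore the extra coordinates), a step the paper leaves implicit and merely alludes to in Remark~2; this is a useful tightening of the same argument rather than a different approach.
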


\begin{proof}
Let $OPT$ be the optimal solution, and $ALG$ be the solution by Algorithm~\ref{alg:mwm} to Problem~(A). By Lemma~\ref{lem:single}, we know that

{\small
\begin{equation}
\frac{\sum_k{U_k^0}}{OPT}\ge \frac{\sum_{k}{\max\{\theta_1(k),\theta_2(k)\}}}{\sum_k{\theta_1(k)+\theta_2(k)}}\ge\frac{1}{2}.
\label{eq:worst}
\end{equation}
}

\vspace{-2ex}
Since $ALG$ is an outcome at least as good as maximum weight matching and M$\_$Gdy is a matching we construct in a greedy way, we have $ALG\ge |M\_Gdy|$. By Lemma~\ref{lem:matching}, we can achieve $\frac{ALG}{OPT}\ge \frac{1}{2}\mu$.
\end{proof}

\emph{Remark 1}: We note that when $\theta_1(k)\gg \theta_2(k)$ or $\theta_2(k)\gg \theta_1(k)$, we can achieve a solution close to the optimal by Algorithm~\ref{alg:mwm} since Equation~(\ref{eq:worst}) becomes close to $1$. Only when $\theta_1(k)$ and $\theta_2(k)$ for all $k$ are close, Equation~(\ref{eq:worst}) is only right above $\frac{1}{2}$. Also, if SU's sensing abilities across channels vary in a large range, or the sensing budgets of SUs are large, the ratio will be close to $1$ since $\rho_i$, $\lambda_i$ will be large, respectively.

\emph{Remark 2}: In the proof of Proposition~\ref{prop:ratio}, we have ignored the greedy heuristic applied to the copies of SUs not included in the matching. Hence the result established above only provides a lower bound on the performance of our algorithm. Proving a tighter bound for the algorithm that incorporates the greedy heuristic is part of our future work.

\section{Simulations}
\label{sec:simu}
In this section, we study the performance of our algorithm through simulations by comparing Algorithm~\ref{alg:mwm} (MWM) with a random sensing assignment algorithm, and a greedy algorithm (defined next).
In the random algorithm, the copies of SUs are randomly assigned to PUs. The greedy algorithm works as follows: for each PU $k$, the set of SUs are first sorted by $P_m^i(k)+P_f^i(k)$ in a non-decreasing order as its preference list. In each round, a random permutation of the set of PUs is applied. The algorithm then goes through the PU list, and for each PU $k$, a copy of the SU, say $s_i$, with the lowest $P_m^i(k)+P_f^i(k)$ among the remaining SUs, which has not been assigned to $k$ before and has remaining copies, is assigned to $k$. Repeat this procedure till all copies of SUs have been assigned.

\subsection{Simulation Setting}
\label{subsec:gen}
The following parameters are fixed throughout the simulations. We consider a $100 \times 100$ area, where the locations of $M$ PUs are randomly generated. For each PU $k$, its maximum power level is randomly chosen between $1$ and $10$, and ${\pi}_0(k)$ are randomly generated in $[0,1]$. We also set $T_c=0.2$ fixed. In each of the 100 runs of the simulation, we apply the model proposed in~\cite{Sun} to generate $P_m^i(k)$ and $P_f^i(k)$. The details are in our online technical report \cite{tech}.

\begin{figure*}[!t]
\centering
\subfigure[{$M=20$, $l_{max}=3$, $\gamma(k) \sim U[1,3]$.}]{
\includegraphics[scale=0.3]{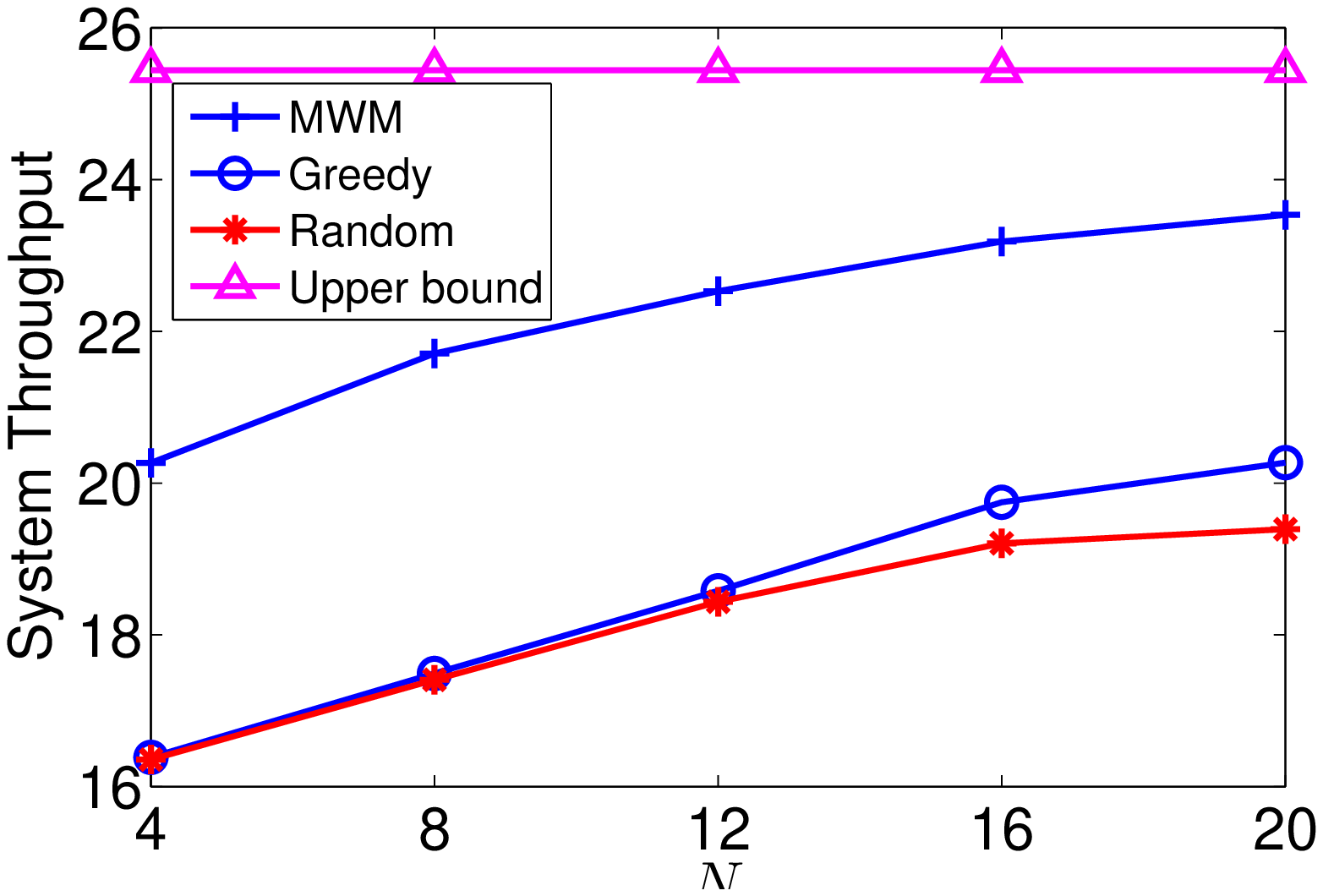}
\label{fig:N}
}
\subfigure[{$M=20$, $N=8$, $\gamma(k) \sim U[1,3]$.}]{
\includegraphics[scale=0.3]{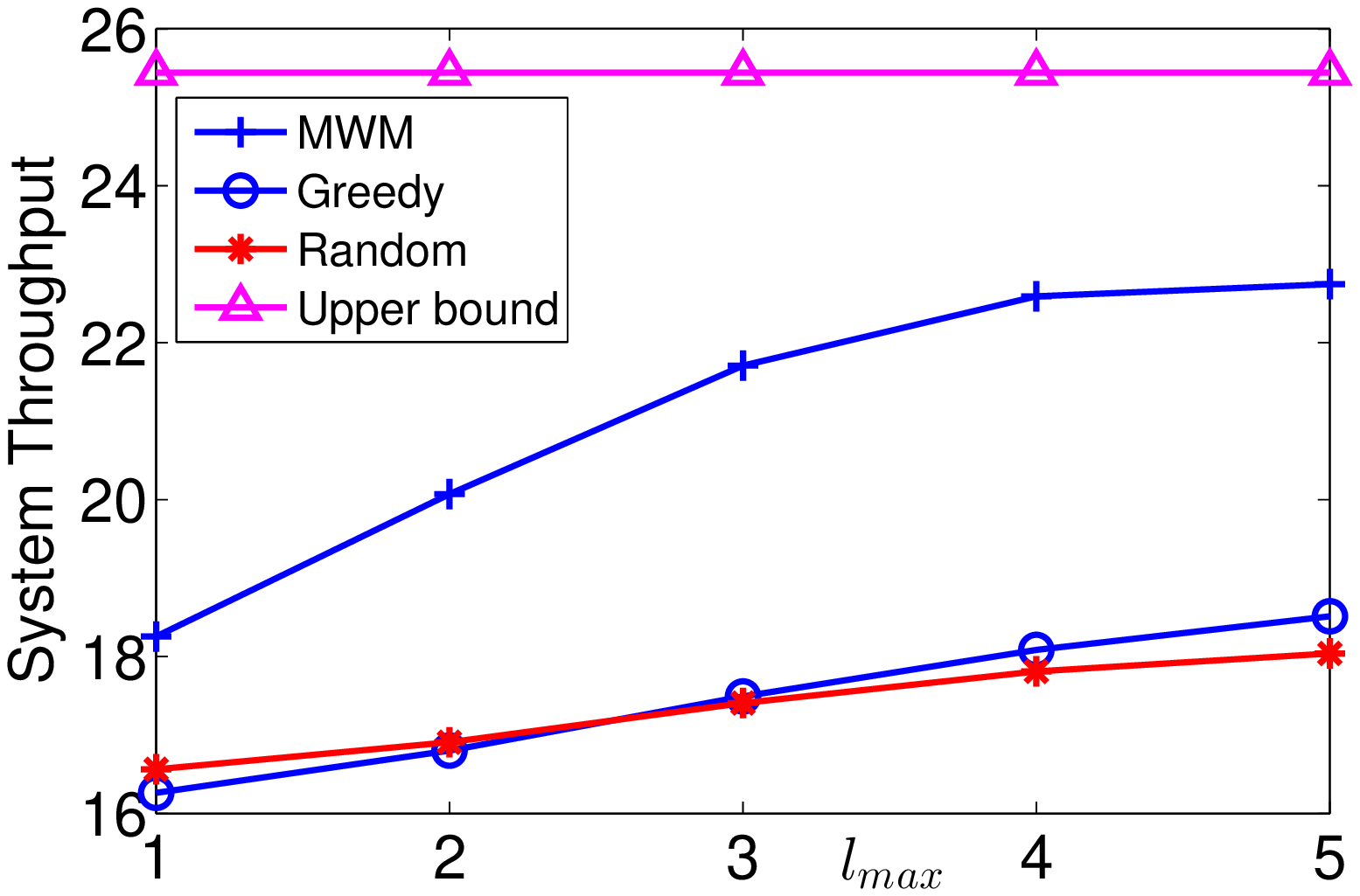}
\label{fig:N_M10}
}
\subfigure[{$M=20$, $N=8$, $l_{max} = 3$.}]{
\includegraphics[scale=0.3]{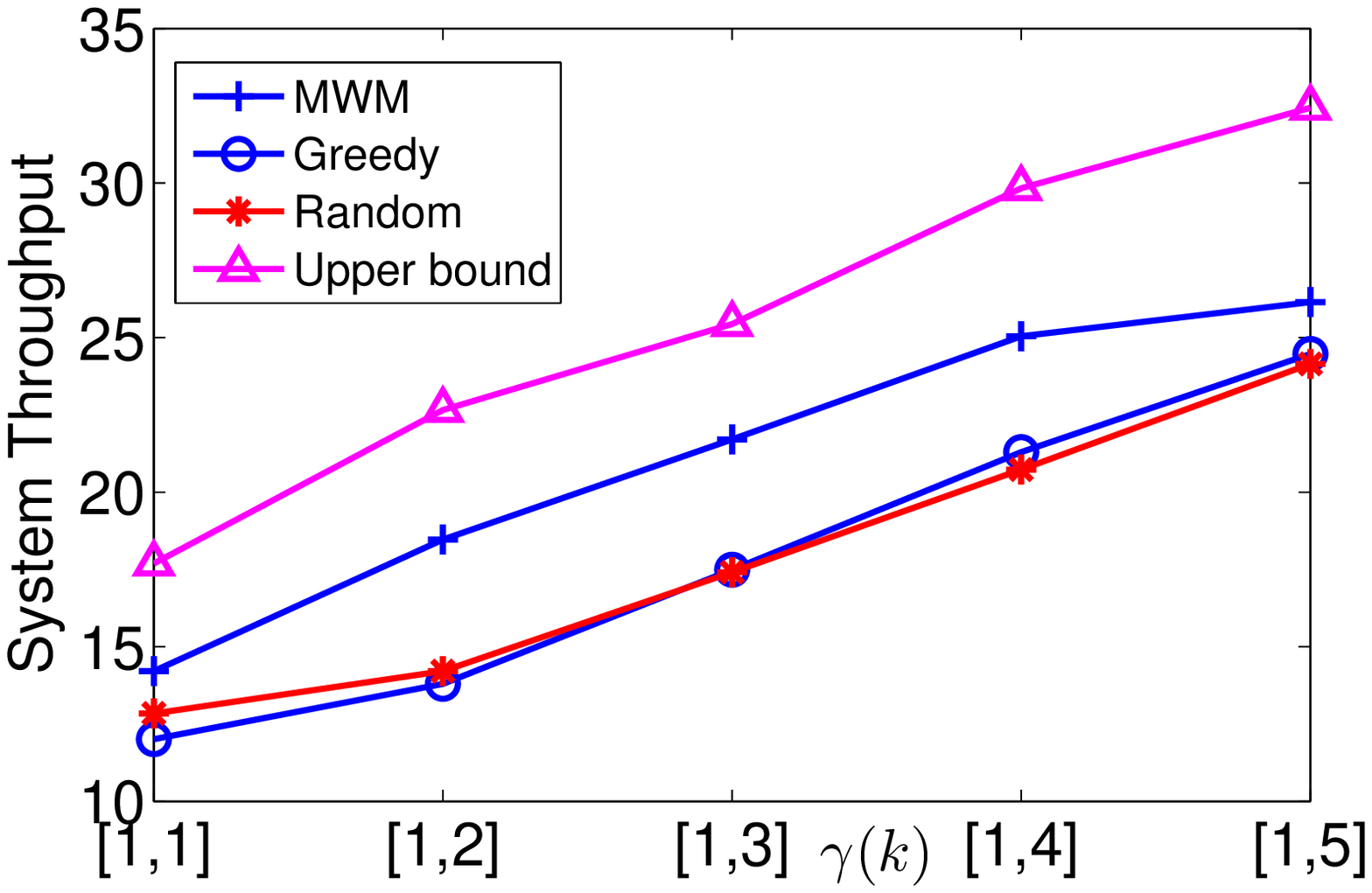}
\label{fig:gamma}
}
\caption{System throughput achieved by our algorithm, greedy algorithm and random algorithm.}
\vspace{-1.5em}
\label{fig:ratio}
\end{figure*}

\subsection{Simulation Results}
\label{subsec:comp}


The simulation results are shown in Figure~\ref{fig:ratio}.
Note that we do not restrict $\sum_i l_i\ge M$ in our simulations. If PU $k$ is not assigned any SU for sensing ($S_k=\emptyset$), the system throughput on channel $k$ is ${\theta}_2(k)$ (Definition~\ref{def:Uk}). In all the figures, we plot $\sum_k{\Big[{\theta}_1(k)+{\theta}_2(k)\Big]}$ as the upper bound for the optimal solution.

In Figure~\ref{fig:N}, we fix $M=20$, $l_{max}=3$, and vary $N$ from $4$ to $20$. For each PU $k$, $\gamma(k)$ in generated randomly in $[1,3]$ and then fixed over all 100 runs. We choose this range since the average PU throughput is usually larger than the unit SU throughput. For each SU $i$, $l_i$ is randomly generated between $1$ and $l_{max}$ and fixed over all the runs. The simulations results are averaged over all 100 runs. We observe that Algorithm~\ref{alg:mwm} achieves significant improvement over the random and the greedy algorithms for all $N$, although the gap shrinks as $N$ increases. For instance, the system throughput of Algorithm~\ref{alg:mwm} is $24\%$ larger than that of the greedy algorithm when $N=4$ and it decreases to $16\%$ when $N=20$. When more SUs join the network, the random and the greedy algorithms have more chance to choose ``good" SUs. The greedy algorithm is comparable to the random algorithm when $N$ is small. However, it wins over the latter when $N\ge 12$. This indicates that the sorting step in the greedy algorithm helps PUs pick the ``right" SUs, which is more useful when $N$ is large. Besides, the performance of Algorithm~\ref{alg:mwm} reaches $95\%$ of the upper bound of the optimal solution when $N=20$.

{In Figure~\ref{fig:N_M10}, $M$, $N$ are fixed to be $20$ and $8$, respectively, and we vary $l_{max}$ from $1$ to $5$. $\gamma(k)$ is again generated randomly in $[1,3]$ and fixed over all 100 runs. Similar to Figure~\ref{fig:N}, Algorithm~\ref{alg:mwm} outperforms both the random and the greedy algorithms, and the greedy algorithm outperforms the random algorithm when $l_{max}\ge 3$. When $l_{max}=4$, the system throughput of Algorithm~\ref{alg:mwm} is $25\%$ better than that of the greedy algorithm, which is the largest gap in the figure. An interesting observation is that the expected number of SU copies when $l_{max}=4$ ($N=8$) is equal to $M=20$, thus every PU is assigned an SU on average. When there is more supply (SUs) than demand (PUs) or more demand than supply, the performance gap between Algorithm~\ref{alg:mwm} and greedy algorithm, random algorithm is not so significant.



In Figure~\ref{fig:gamma}, we fix $M=20$, $N=8$, $l_{max}=3$, and vary the range of the channel capacity $\gamma(k)$. For instance, $[1,2]$ means all channel capacities are randomly generated between $1$ and $2$. Algorithm~\ref{alg:mwm} is constantly better than the other two algorithms. The gap first increases as the channel capacity increases (from $18\%$ to $34\%$) till $\gamma(k) \in [1,2]$, and decreases thereafter ($7\%$ at $\gamma(k) \in [1,5]$). When the channel capacity is comparable to unit SU capacity, the choice of SUs for sensing does not affect the system throughput significantly; When the channel capacity dominates the system throughput, the choice of SUs again loses its leading role. Thus the largest gap appears in the middle.

\section{Conclusion}
\label{sec:con}
In this paper, we investigate the {problem of }throughput maximization using cooperative sensing in multi-channel CRNs, where each SU can only sense a limited number of channels with various sensing capabilities, due to time or energy constraints. We show that under the optimal Bayesian decision rule, the channel sensing assignment problem is strongly NP-hard. A matching based algorithm is then proposed with an approximation ratio that is at least $\frac{1}{2}\mu$ where $\mu\in [1,2]$ is a system parameter.
Our numerical results demonstrate that our algorithm performs significantly better than the
a random channel sensing assignment algorithm and a greedy algorithm. {As part of our future work, we plan to establish a tighter} performance bound for our algorithm enhanced with a greedy heuristic, 
and consider the system throughput maximization problem with extra constraints on the PU throughput.

\bibliographystyle{abbrv}
\bibliography{Multi_Coop_Sens_to_chair}

\begin{thebibliography}{10}

\bibitem{Ng}
{C. T. Ng, M.S.Barketau, T.C.E. Cheng and M. Y. Kovalyov}.
\newblock {``Product Partition'' and related problems of scheduling and systems
  reliability: Computational complexity and approximation}.
\newblock {\em European Journal of Operational Research}, 207(2):601--604, dec
  2010.

\bibitem{Dobzinski}
S.~Dobzinski, N.~Nisan, and M.~Schapira.
\newblock {Approximation algorithms for combinatorial auctions with
  complement-free bidders}.
\newblock In {\em {Proc. of STOC}}, 2005.

\bibitem{Fan_Jiang_2010}
R.~Fan and H.~Jiang.
\newblock Optimal multi-channel cooperative sensing in cognitive radio
  networks.
\newblock {\em IEEE Transactions on Wireless Communications}, 9(3):1128--1138,
  2010.

\bibitem{FCC}
{Federal Communications Commission}.
\newblock Notice of proposed rulemaking, in the matter of unlicensed operation
  in the tv broadcast bands (docket no. 04-186) and additional spectrum for
  unlicensed devices below 900 mhzand in the 3 ghz band (02-380), fcc 04-113.
\newblock May 2004.

\bibitem{Feige}
U.~Feige.
\newblock On maximizing welfare when utility functions are subadditive.
\newblock In {\em Proc. of STOC}, 2006.

\bibitem{Ganesan05}
{G. Ganesan and Y. G. Li}.
\newblock Cooperative spectrum sensing in cognitive radio networks.
\newblock In {\em Proc. of DySPAN}, pages 137--143, nov 2005.

\bibitem{kim}
S.-J. Kim and G.~B. Giannakis.
\newblock Sequential and cooperative sensing for multi-channel cognitive
  radios.
\newblock {\em IEEE Transactions on Signal Processing}, 58(8):4239--4253, aug
  2010.

\bibitem{li}
S.~Li, Z.~Zheng, E.~Ekici, and N.~Shroff.
\newblock Maximizing system throughput by cooperative sensing in cognitive
  radio networks.
\newblock In {\em Proceedings of INFOCOM '12}. IEEE, 2012.

\bibitem{Peh}
E.~Peh and Y.-C. Liang.
\newblock Optimization for cooperative sensing in cognitive radio networks.
\newblock In {\em Proc. of WCNC}, mar 2007.

\bibitem{Mishra06}
{S. M. Mishra, A. Sahai and R. W. Brodersen}.
\newblock Cooperative sensing among cognitive radios.
\newblock In {\em Proc. of ICC}, pages 1658--1663, 2006.

\bibitem{Sun}
C.~Sun, W.~Zhang, and K.~Ben.
\newblock Cluster-based cooperative spectrum sensing in cognitive radio
  systems.
\newblock In {\em Proc. of ICC}, pages 2511 --2515, june 2007.

\bibitem{ieee}
\url{http://grouper.ieee.org/groups/802/22/}.
\newblock {IEEE 802.22, Working Group on Wireless Regional Area Networks
  (WRAN)}.

\bibitem{nytimes}
\url{http://www.nytimes.com/2012/02/17/business/media/congress-to-sell-public-airwaves-to-pay\\-benefits.html?_r=1}.
\newblock {Congress to Sell Public Airwaves to Pay Benefits, New York Times}.

\bibitem{Vijay}
V.~V. Vazirani.
\newblock {\em {Approximation Algorithms}}.
\newblock Springer, mar 2004.

\bibitem{west}
D.~West.
\newblock {\em {Introduction to Graph Theory (2nd Edition)}}.
\newblock {Prentice Hall}, aug 2000.

\bibitem{Zhang}
W.~Zhang and C.~K. Yeo.
\newblock Optimal non-identical sensing setting for multi-channel cooperative
  sensing.
\newblock In {\em Proc. of ICC}, pages 1--5, jun 2011.

\bibitem{tech}
S. Li, Z. Zheng, E. Ekici, and N. Shroff.
\newblock Technical report.
\newblock Maximizing System Throughput Using Cooperative Sensing in Multi-Channel Cognitive Radio Networks, 2012.
\newblock http://www.cse.ohio-state.edu/$\sim$lish/cdc12.pdf

\end{thebibliography}

\end{document}